\documentclass{amsart}

\usepackage{amssymb}
\usepackage{amsmath}
\usepackage{amsthm}
\usepackage{amsbsy}
\usepackage{bm}

\usepackage{hyperref}
\usepackage{url}
\usepackage{dblfloatfix}
\usepackage{mdframed}

\usepackage{listings}
\usepackage{color}
\lstset{
language=scala,
otherkeywords={=, ==, =:=, +, |+|, &&},
literate={~} {$\sim$}{1}
}

\usepackage{enumitem}
\setlistdepth{7}
\renewlist{itemize}{itemize}{7}
\setlist[itemize,1]{label=\textbullet}
\setlist[itemize,2]{label=--}
\setlist[itemize,3]{label=*}
\setlist[itemize,4]{label=-}
\setlist[itemize,5]{label=$\circ$}
\setlist[itemize,6]{label=-}

\theoremstyle{plain}
\newtheorem{theorem}{Theorem}[section]
\newtheorem{lemma}[theorem]{Lemma}

\newtheorem{question}[theorem]{Question}
\newtheorem{conjecture}[theorem]{Conjecture}

\theoremstyle{definition}
\newtheorem{definition}[theorem]{Definition}

\theoremstyle{remark}
\newtheorem{remark}{Remark}[subsection]
\newtheorem*{acknowledgements}{Acknowledgements}

\newcommand{\Z}{\mathbb{Z}}
\newcommand{\R}{\mathbb{R}}

\newcommand{\N}{\mathbb{N}}

\newcommand{\F}{\langle \alpha, \beta \rangle}

\begin{document}

\title[Intertwined computer \& human proofs]{Homogeneous length functions on Groups: \\ Intertwined computer \& human proofs}

\author{Siddhartha Gadgil}

\address{	Department of Mathematics,\\
		Indian Institute of Science,\\
		Bangalore 560012, India}

\email{gadgil@iisc.ac.in}

\keywords{type theory; homotopy type theory; geometric group theory}

\subjclass[2010]{03B15 (primary), 20F12, 20F65 (secondary)}

\date{\today}

\begin{abstract}
We describe  a case of an interplay between human and computer proving which played a role in the discovery of an interesting mathematical result~\cite{polymath}. The unusual feature of the use of computers here was that a computer generated but human readable proof was read, understood, generalized and abstracted by mathematicians to obtain the key lemma in an interesting mathematical result.
\end{abstract}

\maketitle

\section{Introduction}

Computers have come to play many roles in mathematical proofs. Computer experimentation is commonly used to make conjectures and computer algebra systems are used for sophisticated calculations. Components of proofs of important results have also been provided by computers. Such rigorous computer proofs often generate independently verifiable \emph{proof certificates}. Proof assistants have been used to formalize proofs, including some very complex ones.

Here we describe a case different from these -- where a computer generated but \href{https://github.com/siddhartha-gadgil/Superficial/wiki/A-commutator-bound}{human readable} proof  was read, understood, generalized and abstracted by mathematicians to obtain the key lemma in a significant mathematical result. So far as we know this is the only such instance so far.

The result we discuss concerned a question about the existence of so called \emph{homogeneous length functions}, which was asked by Terrence Tao on his blog (Apoorva Khare had asked Tao this question). The question was answered in six days in a collaboration that became PolyMath~14\footnote{\textbf{Participants:} T. Fritz, S. Gadgil, A. Khare, P. Nielsen, L. Silberman, T. Tao.}, and the answer (and stronger results) have been published in~\cite{polymath}.

To state the main question, we need some definitions. We emphasise that in general the groups $G$ we consider are not abelian (commutative), i.e., if $x, y\in G$, we may have $xy\neq yx$. Thus the notation we use is multiplicative, similar to that for matrix multiplication (except with the identity denoted as $e$ rather than $I$). Recall that, for fixed $n\in\N$, \emph{invertible} $n\times n$ matrices form a group.

We sometimes denote the product of $x$ and $y$ as $x\cdot y$ instead of $xy$ for readability.

\begin{definition}
	A \emph{pseudo-length function} on a group $G$ is a function $l: G \to [0, \infty)$ such that
	\begin{itemize}
		\item $l(e) = 0$, where $e\in G$ is the identity.
		\item $l(g^{-1}) = l(g)$ for all $g \in G$ (\textbf{symmetry}).
		\item $l(gh) \leq l(g) + l(h)$ for all $g,h\in G$ (the \textbf{triangle inequality}).
	\end{itemize}
\end{definition}

\begin{definition}[Conjugacy invariance]
	A pseudo-length function $l$ on a group $G$ is said to be \emph{conjugacy invariant} if $l(ghg^{-1}) = l(h)$ for all $g, h\in G$.
\end{definition}

Recall that elements $x, y\in G$ are conjugate if there exists $g\in G$ such that $y = gxg^{-1}$. Conjugacy invariance is thus the property that conjugate elements have equal lengths. Note that in an abelian group, conjugate elements are equal, so this property is automatically satisfied.

\begin{definition}[Homogeneity]
	A pseudo-length function $l$ on a group $G$ is said to be \emph{homogeneous} if $l(g^n) = n\cdot l(g)$ for all $g\in G$, $n \in\Z$.
\end{definition}

\begin{definition}[Positivity]
	A pseudo-length function $l$ on a group $G$ is said to be a \emph{length function} if $l(g) > 0$ for all $g\in G \setminus \{ e \}$.
\end{definition}

If $G = (V, +)$ is the additive group of a vector space $V$ over $\R$, then a \emph{norm} on $V$ gives a homogeneous, conjugacy invariant length function. For example on $\R^2$ both $l_1(x, y) = |x| + |y|$ and $l_2(x, y) = \sqrt{x^2 + y^2}$ are homogeneous, conjugacy invariant length functions. To see this, note that the properties of a pseudo-length follow from the definition of norms. As mentioned above, conjugacy invariance is automatic as additive groups of vector spaces are abelian.

It was generalizing norms on Vector Spaces that motivated the main question  (we elaborate on this after stating Question~\ref{qn}). The question was formulated in terms of free groups as these are the prototypical non-abelian groups.

Recall that the free group $\F$ on $2$ generators $\alpha$ and $\beta$ is the group whose elements are \emph{equivalence classes} of words in $S =\{\alpha$, $\beta$, $\alpha^{-1}$, $\beta^{-1}\}$, where we think of $\alpha^{-1}$ and $\beta^{-1}$ as simply formal symbols (we will see that in $\F$ their equivalence classes are inverses of the equivalence classes of  $\alpha$ and $\beta$). Namely, we define an equivalence relation $\sim$ so that two words equivalent if and only if they are related by a  sequence of moves given by cancellation of pairs of \textbf{adjacent} letters that are \textbf{inverses}
of each other and its inverse move, namely \emph{inserting} a cancelling pair of letters. For example, in $\F$, $\alpha\beta\beta^{-1}\alpha\beta\alpha^{-1} = \alpha\alpha\beta\alpha^{-1}$
as cancelling the second and third letters of $\alpha\beta\beta^{-1}\alpha\beta\alpha^{-1}$ gives $\alpha\alpha\beta\alpha^{-1}$.
Conversely, inserting $\beta\beta^{-1}$ between the first and second letters of $\alpha\alpha\beta\alpha^{-1}$ gives $\alpha\beta\beta^{-1}\alpha\beta\alpha^{-1}$.

Formally, we consider the equivalence relation $\sim$ on words in $S$ generated by $$\xi_1\xi_2\dots\xi_m\lambda\lambda^{-1}\xi_{m+1}\dots \xi_n \sim \xi_1\xi_2\dots\xi_m\xi_{m+1}\dots \xi_n$$ where $\lambda\in S$, $\xi_i\in S \ \forall i, 1\leq i\leq n$ and $0\leq m\leq n$.
The case $m=0$ corresponds to prepending a cancelling pair, and $m=n$ to appending a cancelling pair. The case $n=0$ (which forces $m=0$) corresponds to the empty word.
The elements of $\F$ are equivalence classes under this equivalence relation.

Multiplication in $\F$ is given by concatenation, i.e.
$$(\xi_1\xi_2\dots \xi_n) \cdot (l'_1l'_2\dots l'_m) = \xi_1\xi_2\dots \xi_nl'_1l'_2\dots l'_m$$
where $\cdot$ denotes the group multiplication. More formally, concatenation of words induces a well-defined multiplication on equivalence classes under $\sim$ of words. The identity is the empty word $e$ (more formally the equivalence class of $e$), and the inverse of an element is obtained by inverting letters and reversing the order, i.e., $(\xi_1\xi_2\dots \xi_n)^{-1}=\xi_n^{-1}\dots \xi_2^{-1}\xi_1^{-1}$.

We can now state the main question that was studied.

\begin{question}\label{qn}
	Is there a homogeneous, conjugacy-invariant length function $l$ on the free group $\F$ on $2$ generators?
\end{question}

Khare asked this question motivated by wanting to generalize results of Khare-Rajaratnam~\cite{KR1}\cite{KR2} from vector spaces with norms to a more general context where commutativity was no longer assumed. However, it was not clear whether any (additional) examples would satisfy this more general hypothesis. The free group was taken as a prototypical group which is not abelian. In fact the results of~\cite{polymath} show that, in a strong sense, the only groups having homogeneous, conjugacy invariant length functions are abelian groups, and all such functions are restrictions of norms to subgroups of vector spaces.

The question is also natural from the point of view of Geometric group theory, where length functions are a central concept and conjugacy invariance of lengths (which corresponds to bi-invariance of metrics) is also commonly studied. Lengths satisfying the additional condition of \emph{homogeneity} were not much studied previously -- which we now know is because there are no interesting examples (except restrictions of norms on vector spaces, which are well understood).

\section{Homogeneous length functions and the Internal repetition trick}\label{S:Homogeneous}

We now describe the history and some ingredients of the solution Question~\ref{qn}.

It is natural to view Question~\ref{qn} as asking whether a homogeneous, conjugacy-invariant pseudo-length function $l$ on $\F$ can also be positive, hence a length function. Further, we can normalize to assume that $l(s)\leq 1$ for $s=\alpha, \beta$ (hence, by symmetry, $l(s)\leq 1$ for $s = \alpha^{-1}, \beta^{-1}$). We shall say $l$ is \emph{normalized} if $l(s)\leq 1$ for $s=\alpha, \beta, \alpha^{-1}, \beta^{-1}$.

After the failure of various constructions (by day $3$), the following conjecture seemed likely.

\begin{conjecture}
	For any homogeneous, conjugacy-invariant pseudo-length function $l$ on $\F$, we have $l(\alpha\beta\alpha^{-1}\beta^{-1}) = 0$.
\end{conjecture}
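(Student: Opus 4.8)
The plan is to exploit the interplay between conjugacy invariance and homogeneity to produce a self-improving inequality on $l(g)$ where $g = \alpha\beta\alpha^{-1}\beta^{-1}$, forcing $l(g) = 0$. Write $c = \alpha\beta\alpha^{-1}\beta^{-1}$. The guiding idea is the ``internal repetition trick'': one finds an element $w$ built out of a power of $c$ whose length is on one hand bounded above using the triangle inequality (so it grows slowly, like a bounded number of generator-lengths plus a few copies of $l(c)$), and on the other hand, after applying conjugacy invariance to rearrange letters, is seen to be conjugate to a product of $k$ copies of $c^m$ for a suitable pair $k, m$, so that homogeneity gives a lower bound $k m \, l(c)$ — or, better, the word literally equals $c^{km}$ times something short. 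Comparing the two bounds gives $k m \, l(c) \le (\text{const}) + (\text{small}) \, l(c)$, and letting $m \to \infty$ with $k$ growing forces $l(c) = 0$.

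Concretely, first I would normalize $l$ so that $l(\alpha), l(\beta) \le 1$, which is harmless by homogeneity (rescale $l$). Next I would look for an algebraic identity in $\F$ of the shape
\[
  \big(\alpha^{a} \beta^{b}\big)\, c^{N} \,\big(\alpha^{a}\beta^{b}\big)^{-1} \cdot (\text{few short conjugates}) \;=\; c^{M}
\]
with $M$ growing faster than linearly in the number of factors used — the cleanest version being that some word whose $l$-length is at most $A + B\,l(c)$ (with $A, B$ absolute constants coming from finitely many generators and finitely many conjugates of $c$) is actually equal in $\F$ to $c^{n^2}$ or similar. Then homogeneity gives $n^2 l(c) = l(c^{n^2}) \le A + B\, n\, l(c)$ (the $n$ appearing because the ``short'' conjugates may themselves involve $c^{n}$), whence $l(c) \le A/(n^2 - Bn) \to 0$. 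I would hunt for such an identity by hand-computing in the free group, using the fact that $c$ is a commutator so that conjugates of $c$ by $\alpha^i\beta^j$ telescope nicely.

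The main obstacle — and the place where the computer-found proof does the real work — is discovering the exact combinatorial identity: namely which word, expressed as a short product of generators and conjugates of powers of $c$, simultaneously (i) has a triangle-inequality upper bound that is \emph{sub-quadratic} in the relevant parameter $n$ while (ii) equals a \emph{quadratic} power of $c$ (or of a commutator conjugate to $c$) after free reduction. Verifying a proposed identity is routine free-group bookkeeping, but the self-improving structure has to be exactly right: if the upper bound were already quadratic in $n$, no contradiction would ensue. I expect to spend essentially all the effort there, and once the identity is in hand the rest is a one-line limit argument using only homogeneity, symmetry, conjugacy invariance, and the triangle inequality.
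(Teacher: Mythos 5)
You have written a strategy, not a proof: the entire mathematical content of your plan is the combinatorial identity, which you never produce and explicitly defer (``I expect to spend essentially all the effort there''), so nothing about $l(c)$, $c=\alpha\beta\alpha^{-1}\beta^{-1}$, is actually established. Worse, the ``cleanest version'' you aim for cannot exist. Unwinding any derivation that uses only normalization, the triangle inequality, conjugacy invariance, and homogeneity at powers of $c$ exhibits $c^{n^2}$ in $\F$ as a product of conjugates of the pieces used: boundedly many generator letters (total cost $A=O(1)$) and powers $c^{m_i}$ (total cost $\sum_i|m_i|\,l(c)$, which you want to be $\le Bn\,l(c)$). Map this identity to the free class-two nilpotent quotient of $\F$, the integer Heisenberg group: $c$ goes to the central generator $z$, every conjugate of $c^{m_i}$ goes to $z^{m_i}$, so the generator-letter pieces must multiply to an element mapping to $z^{\,n^2-\sum_i m_i}$; but a word of length $A$ in the generators can represent $z^{E}$ only with $|E|=O(A^2)$. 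Hence $\sum_i|m_i|\ge n^2-O(A^2)$, your upper bound is at least $\bigl(n^2-O(A^2)\bigr)\,l(c)$, and letting $n\to\infty$ yields nothing. So a one-shot ``sub-quadratic cost equals quadratic power of $c$'' identity, with homogeneity invoked essentially only at $c$, is not available, and this is consistent with the historical fact recorded in the paper that such direct manipulations stalled at bounds above $0.9$.

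The argument that works has a genuinely different architecture, and your plan misses both of its essential ingredients. First, homogeneity is applied to \emph{auxiliary} elements, not just to $c$: in Lemma~\ref{rainbow} to the element $x$ itself, and in Lemma~\ref{fritz} to the elements $x^m(xyx^{-1}y^{-1})^k$ (in the computer-generated proof, to the family $\gamma_k=\alpha(\alpha\beta\alpha^{-1}\beta^{-1})^k$ together with powers of $c$). Second, the self-improvement is not a single identity but the averaged inequality
\[
f(m,k)\;\le\;\frac{f(m-1,k)+f(m+1,k-1)}{2},\qquad f(m,k)=l\bigl(x^m(xyx^{-1}y^{-1})^k\bigr),
\]
which must be \emph{iterated}: interpreting it along a random walk and using Chebyshev's inequality gives the sublinear bound $l(c^n)=f(0,n)\le C\sqrt{n}$, and only then does homogeneity of $c$ enter, via $l(c)\le f(0,n)/n\to 0$. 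Note also that the true bound on $l(c^n)$ is of order $\sqrt{n}$, not of the form ``constant plus a small multiple of $l(c)$''; aiming for the latter is aiming at the wrong target. (A minor point: your normalization $l(\alpha),l(\beta)\le 1$ is indeed harmless, but it is achieved by rescaling $l$ by a positive constant, which has nothing to do with homogeneity, and the degenerate case $l(\alpha)=l(\beta)=0$ already forces $l\equiv 0$ by the triangle inequality and symmetry.)
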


In particular, this conjecture implies that $l$ cannot be a length function. Note that it is natural to focus on the element $\alpha\beta\alpha^{-1}\beta^{-1}$ as a group $G$ is abelian if and only if $xyx^{-1}y^{-1} =1\ \forall x, y\in G$, and we were trying to understand whether there are non-abelian examples of groups with length functions with the desired properties.

Several bounds on $l(\alpha\beta\alpha^{-1}\beta^{-1})$ were obtained from the hypothesis, giving bounds that even went below $1$. However, these methods appeared to stagnate with the best bound obtained a little above $0.9$.

Using computer-assistance, we obtained and posted
 a \href{https://github.com/siddhartha-gadgil/Superficial/wiki/A-commutator-bound}{human readable} proof showing $l(\alpha\beta\alpha^{-1}\beta^{-1}) \leq 0.82$.
 An extract of this proof is below\footnote{\url{https://github.com/siddhartha-gadgil/Superficial/wiki/A-commutator-bound} for the full proof as originally posted.}. Note that we have used somewhat different notation here -- the generators are $a$ and $b$ and their inverses are denoted $\bar{a}$ and $\bar{b}$. We remark that a fully expanded proof actually had over 2000 lines, but avoiding duplication gave the posted $126$ lines.

{\tiny\begin{itemize}[leftmargin=*]
	\item $|\bar{a}| \leq 1.0$
	\item $|\bar{b}\bar{a}b| \leq 1.0$ using $|\bar{a}| \leq 1.0$
	\item $|\bar{b}| \leq 1.0$
	\item $|a\bar{b}\bar{a}| \leq 1.0$ using $|\bar{b}| \leq 1.0$
	\item $|\bar{a}\bar{b}ab\bar{a}\bar{b}| \leq 2.0$ using $|\bar{a}\bar{b}a| \leq 1.0$ and $|b\bar{a}\bar{b}| \leq 1.0$
	\item ... (119 lines)
	\item $|ab\bar{a}\bar{b}ab\bar{a}\bar{b}ab\bar{a}\bar{b}ab\bar{a}\bar{b}ab\bar{a}\bar{b}ab\bar{a}\bar{b}ab\bar{a}\bar{b}ab\bar{a}\bar{b}ab\bar{a}\bar{b}ab\bar{a}\bar{b}ab\bar{a}\bar{b}ab\bar{a}\bar{b}ab\bar{a}\bar{b}ab\bar{a}\bar{b}ab\bar{a}\bar{b}ab\bar{a}\bar{b}ab\bar{a}\bar{b}| \leq 13.859649122807017$ \\ using $|ab\bar{a}| \leq 1.0$ and \\ $|\bar{b}ab\bar{a}\bar{b}ab\bar{a}\bar{b}ab\bar{a}\bar{b}ab\bar{a}\bar{b}ab\bar{a}\bar{b}ab\bar{a}\bar{b}ab\bar{a}\bar{b}ab\bar{a}\bar{b}ab\bar{a}\bar{b}ab\bar{a}\bar{b}ab\bar{a}\bar{b}ab\bar{a}\bar{b}ab\bar{a}\bar{b}ab\bar{a}\bar{b}ab\bar{a}\bar{b}ab\bar{a}\bar{b}| \leq 12.859649122807017$
	\item $|ab\bar{a}\bar{b}| \leq 0.8152734778121775$ using \\ $|ab\bar{a}\bar{b}ab\bar{a}\bar{b}ab\bar{a}\bar{b}ab\bar{a}\bar{b}ab\bar{a}\bar{b}ab\bar{a}\bar{b}ab\bar{a}\bar{b}ab\bar{a}\bar{b}ab\bar{a}\bar{b}ab\bar{a}\bar{b}ab\bar{a}\bar{b}ab\bar{a}\bar{b}ab\bar{a}\bar{b}ab\bar{a}\bar{b}ab\bar{a}\bar{b}ab\bar{a}\bar{b}ab\bar{a}\bar{b}| \leq 13.859649122807017$ by \\ taking 17th power.
\end{itemize}}

This proof was studied, understood and generalized by Pace Nielsen, who called the method the \emph{internal repetition trick}. After several improvements due to Nielsen and Tobias Fritz, this was abstracted by Terrence Tao as the following lemma. Note that this holds for all conjugacy-invariant, homogeneous pseudo-lengths $l$ on all groups $G$.
\begin{lemma}\label{rainbow}
	Let $x$, $y$, $z$, $w$ in $G$ be such that $x$ is conjugate
to both $wy$ and $zw^{-1}$, i.e., there exist elements $s, t\in G$ such that $x=swys^{-1}=tzw^{-1}t^{-1}$. Then one has
$$l(x)\leq\frac{l(y) + l(z)}{2}.$$

\end{lemma}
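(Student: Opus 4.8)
The plan is to combine homogeneity with conjugacy invariance through what the paper calls the internal repetition trick. The first move is bookkeeping: by conjugacy invariance the hypotheses already give $l(x)=l(wy)=l(zw^{-1})$, and the two conjugations combine into a single identity — from $s(wy)s^{-1}=x=t(zw^{-1})t^{-1}$ one gets $wy=g(zw^{-1})g^{-1}$ with $g:=s^{-1}t$, equivalently the shuttle relation $(wy)^k g=g(zw^{-1})^k$ for every $k$. By homogeneity $l(x^{2n})=2n\,l(x)$, so the conclusion is equivalent to the asymptotic statement $l(x^{2n})\le n\,l(y)+n\,l(z)+o(n)$; once that holds, dividing by $n$ and letting $n\to\infty$ finishes the proof. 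Thus it suffices to write some conjugate of $x^{2n}$ as a product of $n$ conjugates of $y$, $n$ conjugates of $z$, and a number of further factors bounded independently of $n$.

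It is worth isolating the mechanism in the degenerate case $s=t$, where $g=e$ and $wy=zw^{-1}$ literally. Then $x^2=s(wy)(zw^{-1})s^{-1}$ is conjugate to $(wy)(zw^{-1})=w\,(yz)\,w^{-1}$, hence to $yz$, so $2\,l(x)=l(x^2)=l(yz)\le l(y)+l(z)$ at once (even $n=1$ suffices), and more generally $x^{2n}$ is conjugate to $w\,(yz)^n\,w^{-1}$. The entire content of the lemma is to reproduce this cancellation of the inner pair $w\,w^{-1}$ when $g\neq e$, where $wy$ and $zw^{-1}$ are only conjugate rather than equal.

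For the general case I would start from $x^{2n}=s(wy)^ns^{-1}\cdot t(zw^{-1})^nt^{-1}$, which by conjugacy invariance reduces the problem to bounding $l\big((wy)^n\,g\,(zw^{-1})^n\,g^{-1}\big)$. The internal repetition step is the rearrangement of this word: one shuttles the conjugator $g$ past the blocks via $(wy)^kg=g(zw^{-1})^k$ and interleaves the $n$ factors $wy$ with the $n$ factors $zw^{-1}$, aiming to have the $n$ copies of $w$ produced by $(wy)^n$ telescope against the $n$ copies of $w^{-1}$ produced by $(zw^{-1})^n$ in a cascade, exactly as the single pair did above. If the cascade can be made to run all the way, only $O(1)$ ``boundary'' factors survive — a bounded number of conjugates of $w$ and of $g$ — while the remaining $y$- and $z$-letters supply $n$ conjugates of $y$ and $n$ of $z$; the triangle inequality then gives $2n\,l(x)=l(x^{2n})\le n\,l(y)+n\,l(z)+C$ with $C$ independent of $n$, and the limit yields $l(x)\le\tfrac{1}{2}\big(l(y)+l(z)\big)$.

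The hard part is precisely this telescoping. A careless decomposition of $x^{2n}$ — say splitting $(wy)^n$ as $w\,(yw)^{n-1}\,y$ — leaves $n-1$ un-cancelled copies of $w$ buried in a block, contributing a term of size $\sim n\,l(w)$ that survives the limit and only proves the weaker $2l(x)\le l(y)+l(z)+2l(w)$; a naive interleaving instead leaves $\Theta(n)$ copies of a commutator-type term in $g$ and $w$, proving only $2l(x)\le l(y)+l(z)+2l(g)$. What must be arranged is that both the $w$'s and the conjugators are whittled down to a residue of bounded size rather than stalling after a single cancellation — that is the crux, it is what produces the sharp constant $\tfrac{1}{2}$ with no error term, and it is the point distilled from the computer-generated proof.
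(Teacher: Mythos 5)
Your outer frame --- pass to $x^{2n}$, use homogeneity, aim for $l(x^{2n})\le n\,l(y)+n\,l(z)+O(1)$, then divide and take limits --- is exactly the right one. But the proof is not complete: you explicitly leave ``the telescoping'' open, and that is the entire content of the lemma. The missing idea is not an interleaving of the $wy$-blocks with the $zw^{-1}$-blocks via the shuttle relation, but the observation that moving a power of $w$ past a $y$ is free: by conjugacy invariance $w^{-i}yw^{i}$ costs exactly $l(y)$, however large $i$ is. Concretely, collect all the $w$'s at the outer ends:
$$(wy)^n = w^{n}\bigl(w^{-(n-1)}yw^{n-1}\bigr)\cdots\bigl(w^{-1}yw\bigr)y, \qquad (zw^{-1})^n = z\bigl(w^{-1}zw\bigr)\cdots\bigl(w^{-(n-1)}zw^{n-1}\bigr)w^{-n}.$$
Replacing $x$ by its conjugate $wy$ (so that $wy=g(zw^{-1})g^{-1}$ with your $g=s^{-1}t$), one gets
$$x^{2n}=(wy)^n\,g\,(zw^{-1})^n\,g^{-1}=w^{n}\,Y\,g\,Z\,w^{-n}g^{-1},$$
where $Y$ is a product of $n$ conjugates of $y$ and $Z$ a product of $n$ conjugates of $z$. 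Conjugating by $w^{-n}$ (again free) and applying the triangle inequality termwise,
$$2n\,l(x)=l(x^{2n})=l\bigl(Y\,g\,Z\,(w^{-n}g^{-1}w^{n})\bigr)\le n\,l(y)+n\,l(z)+2\,l(g),$$
and dividing by $2n$ and letting $n\to\infty$ gives the claim.

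This resolves precisely the worry you raise: the decomposition $(wy)^n=w(yw)^{n-1}y$ is ``careless'' only if one insists that the $w$'s cancel against the $w^{-1}$'s one by one inside the word. They need not cancel at all: each stray block of $w$'s is absorbed by conjugating the adjacent $y$'s and $z$'s, whose lengths are unchanged, and the two blocks $w^{\pm n}$ that survive sit at the two ends, where one cyclic rotation (conjugacy invariance once more) turns $w^{-n}g^{-1}w^{n}$ into an element of length $l(g)$. No cascade, no interleaving, and no use of the relation $(wy)^kg=g(zw^{-1})^k$ is needed; the only error term is $2l(g)$, independent of $n$. Without this step your argument yields only the weaker bounds you yourself note, so as written there is a genuine gap at the crux of the lemma.
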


Fritz used this to obtain the key lemma.

\begin{lemma}\label{fritz}
	Let $f(m,k)=l(x^m (xyx^{-1}y^{-1})^k)$. Then $$f(m,k)\leq \frac{f(m-1,k)+f(m+1,k-1)}{2}.$$
\end{lemma}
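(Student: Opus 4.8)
The plan is to derive Lemma~\ref{fritz} as a direct application of Lemma~\ref{rainbow}, by finding the right choices of $x$, $y$, $z$, $w$ (in the notation of Lemma~\ref{rainbow}) in terms of the data $x$, $y$, $m$, $k$ of Lemma~\ref{fritz}. To avoid a notational clash, let me write the statement of Lemma~\ref{rainbow} with dummy variables $X$, $Y$, $Z$, $W$: if $X$ is conjugate to both $WY$ and $ZW^{-1}$, then $l(X) \le (l(Y) + l(Z))/2$. So I want to exhibit an element $X$ with $l(X) = f(m,k) = l(x^m(xyx^{-1}y^{-1})^k)$, and elements $Y$, $Z$ with $l(Y) = f(m-1,k)$ and $l(Z) = f(m+1,k-1)$, together with a $W$ making $X$ simultaneously conjugate to $WY$ and to $ZW^{-1}$.

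The natural guess is $X = x^m(xyx^{-1}y^{-1})^k$ itself. First I would look for $W$, $Y$ with $WY = X$ and $l(Y) = f(m-1,k)$: writing $c = xyx^{-1}y^{-1}$, we have $X = x \cdot x^{m-1}c^k$, so taking $W = x$ and $Y = x^{m-1}c^k$ gives $WY = X$ on the nose (so $X$ is trivially conjugate to $WY$), and $l(Y) = l(x^{m-1}c^k) = f(m-1,k)$. Now with $W = x$ fixed, I need $X$ conjugate to $ZW^{-1} = Zx^{-1}$ with $l(Z) = f(m+1,k-1) = l(x^{m+1}c^{k-1})$. The key algebraic observation to check is that $x^{m+1}c^{k-1}$ and $X x = x^{m+1}(xyx^{-1}y^{-1})^k$ differ by conjugation: indeed $c^{k} = c^{k-1}\cdot xyx^{-1}y^{-1}$, and one rewrites $x^{m+1}c^{k-1}\cdot xyx^{-1}y^{-1}$ using conjugacy invariance of $l$ to move the trailing $y^{-1}$ around — concretely, $l(Xx) = l(x^{m+1}c^{k-1}xyx^{-1}y^{-1}) = l(y^{-1}x^{m+1}c^{k-1}xyx^{-1})$, and since $x^{m+1}$ and $c^{k-1}$ both commute past appropriately... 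I would instead verify the cleaner statement that $Xx$ is conjugate to $x^{m+1}c^{k-1}$ directly. Setting $Z = x^{m+1}c^{k-1}$, the claim is $ZW^{-1} = x^{m+1}c^{k-1}x^{-1}$ is conjugate to $X = x^m c^k$; equivalently (conjugating by $x$) that $x^{m+1}c^{k-1}$ is conjugate to $x \cdot x^m c^k \cdot x^{-1}\cdot x = x^{m+1}c^k x^{-1}\cdot x$, hmm — the transparent route is: $x^{m+1}c^{k-1}x^{-1}$ conjugate to $x^{-1}\cdot x^{m+1}c^{k-1}x^{-1}\cdot x = x^m c^{k-1}$? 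That is not $X$. So the correct bookkeeping is slightly more delicate, and I expect that pinning down exactly which cyclic rotation / conjugation identifies $x^{m+1}c^{k-1}x^{-1}$ with $x^m c^k$ (using $c = xyx^{-1}y^{-1}$, so that $c^{k-1}\cdot(\text{something conjugate to } c) $ absorbs the extra $x$) is the one genuinely fiddly step.

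Here is the clean way I would organize that step. Note $x^m c^k = x^m c^{k-1} c = x^m c^{k-1}(xyx^{-1}y^{-1})$. Conjugating by $y$ on the right: $x^m c^k$ is conjugate to $y^{-1}(x^m c^{k-1}xyx^{-1}y^{-1})y = (y^{-1}x^m c^{k-1}x)\,(yx^{-1})$, which I would then recognize, after one more conjugation (by $yx^{-1}$), as conjugate to $(yx^{-1})(y^{-1}x^m c^{k-1}x) = yx^{-1}y^{-1}\cdot x^{m+1}c^{k-1}\cdot x \cdot x^{-1}$... I will chase this until it lands on something of the form $u\,(x^{m+1}c^{k-1})\,u^{-1}\cdot x^{-1}$, i.e., $ZW^{-1}$ conjugated, with $W = x$ and $Z = x^{m+1}c^{k-1}$. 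The main obstacle, and really the only one, is getting these conjugations to close up correctly — everything else (the identifications $l(Y) = f(m-1,k)$, $l(Z) = f(m+1,k-1)$, $l(X) = f(m,k)$) is immediate from the definition of $f$ and homogeneity/conjugacy invariance is not even needed there. Once the two conjugacies $X \sim WY$ and $X \sim ZW^{-1}$ are in hand, Lemma~\ref{rainbow} applied to $X$, $Y$, $Z$, $W$ yields $f(m,k) = l(X) \le (l(Y)+l(Z))/2 = (f(m-1,k)+f(m+1,k-1))/2$, which is exactly the claim.
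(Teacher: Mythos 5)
Your overall strategy --- deducing Lemma~\ref{fritz} from Lemma~\ref{rainbow} with (in the lemma's notation) $W=x$, $Y=x^{m-1}(xyx^{-1}y^{-1})^k$, and some $Z$ of length $f(m+1,k-1)$ --- is exactly the route the paper indicates (it attributes this deduction to Fritz and prints no details), and your first conjugacy $X=WY$ is fine. The genuine gap is at the step you yourself flag as the fiddly one: the second conjugacy is never actually closed, and the manipulations sketched there go wrong. Writing $c=xyx^{-1}y^{-1}$, the remark that ``$x^{m+1}$ and $c^{k-1}$ both commute past appropriately'' is unjustified ($x$ does not commute with $c^{k-1}$), and the displayed equality $(yx^{-1})(y^{-1}x^m c^{k-1}x)=yx^{-1}y^{-1}\cdot x^{m+1}c^{k-1}\cdot x\cdot x^{-1}$ in your chase is false in general for the same reason: $x^m c^{k-1}x$ and $x^{m+1}c^{k-1}$ are conjugate, not equal, and you conflate them mid-computation. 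You did correctly observe that the literal choice $Z=x^{m+1}c^{k-1}$ fails (since $x^{m+1}c^{k-1}x^{-1}$ is conjugate to $x^m c^{k-1}$, not to $X$), but you never settle on a $Z$ that works.

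The fix is one regrouping away from what you already wrote. You computed $y^{-1}Xy=y^{-1}x^m c^{k-1}xyx^{-1}$; group this as $\bigl(y^{-1}x^m c^{k-1}xy\bigr)\,x^{-1}$ rather than as $(y^{-1}x^m c^{k-1}x)(yx^{-1})$. Now take $Z:=y^{-1}x^m c^{k-1}xy$ and $t=y$ in Lemma~\ref{rainbow}, so that $X=tZW^{-1}t^{-1}$ with the \emph{same} $W=x$ as in the first conjugacy. The lemma does not require $Z$ to be $x^{m+1}c^{k-1}$ itself; only $l(Z)$ enters, and $Z$ is conjugate to $x^m c^{k-1}x$, hence (conjugating by $x$) to $x^{m+1}c^{k-1}$, so $l(Z)=f(m+1,k-1)$ by conjugacy invariance. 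Together with $l(Y)=f(m-1,k)$, Lemma~\ref{rainbow} gives $f(m,k)=l(X)\leq\frac{f(m-1,k)+f(m+1,k-1)}{2}$, which completes the argument you outlined; the further conjugation by $yx^{-1}$ you proposed is unnecessary and in fact destroys the required form $ZW^{-1}$.
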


We apply this lemma to $x=\alpha$, $y=\beta$. An argument based on probability theory, due to Tao, showed that $l(\alpha\beta\alpha^{-1}\beta^{-1})=0$. This in particular answered Question~\ref{qn} (the main result proved in~\cite{polymath} is actually stronger than the Theorem~\ref{main}).
\begin{theorem}[see ~\cite{polymath}]\label{main}
  For every homogeneous, conjugacy-invariant pseudo-length function $l:\F\to [0, \infty)$, we have $l(\alpha\beta\alpha^{-1}\beta^{-1}) =0$. In particular $l$ is not a length function.
\end{theorem}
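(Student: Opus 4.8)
The plan is to combine Lemma~\ref{fritz} with a probabilistic argument, following the indicated strategy of Tao. Write $f(m,k) = l(\alpha^m(\alpha\beta\alpha^{-1}\beta^{-1})^k)$ as in Lemma~\ref{fritz}, so that our goal is to show $f(0,1) = 0$. First I would record the crude a priori bound that comes from the triangle inequality and homogeneity together with the normalization $l(\alpha), l(\beta) \le 1$: since $\alpha^m(\alpha\beta\alpha^{-1}\beta^{-1})^k$ is a word of length $|m| + 4k$ in the generators, we get $f(m,k) \le |m| + 4k$, and more usefully $f(m,k) \le C(|m| + k)$ for a fixed constant $C$ (one can take $C = 4$). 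This linear growth bound is what will let the probabilistic estimates converge.

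Next I would iterate the recursion of Lemma~\ref{fritz}. The inequality $f(m,k) \le \tfrac12 f(m-1,k) + \tfrac12 f(m+1, k-1)$ says exactly that $f$ is a subharmonic-type function for a random walk: consider the Markov chain on $\Z \times \Z_{\ge 0}$ that from state $(m,k)$ with $k \ge 1$ moves to $(m-1, k)$ or $(m+1, k-1)$ each with probability $\tfrac12$, and is absorbed when $k = 0$. Then for any $N$, unrolling the recursion $N$ times gives
$$f(0,1) \le \mathbb{E}\big[ f(M_\tau, K_\tau) \big],$$
where, starting from $(0,1)$, we either run for $N$ steps or stop upon absorption at $k=0$, and $(M_\tau, K_\tau)$ is the stopped state. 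The key observation is that the $k$-coordinate is nonincreasing and strictly decreases exactly when the walk takes an "$+1$" step in the $m$-coordinate; starting from $k = 1$, absorption happens at the first such step, which occurs after a geometric number of steps (finite almost surely). At absorption the state is $(M_\tau, 0)$ with $f(M_\tau, 0) = l(\alpha^{M_\tau}) = |M_\tau| \cdot l(\alpha) \le |M_\tau|$.

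So it remains to control $\mathbb{E}[|M_\tau|]$ — and here is where the argument needs care, since a single run gives only $f(0,1) \le \mathbb{E}[|M_\tau|] \le 1$ (when $l(\alpha) \le 1$), not $0$. The trick is to apply the whole scheme not to $(0,1)$ but to $(0, k)$ for large $k$, using homogeneity in the commutator direction: $f(0,k) = l\big((\alpha\beta\alpha^{-1}\beta^{-1})^k\big) = k \cdot l(\alpha\beta\alpha^{-1}\beta^{-1}) = k \cdot f(0,1)$. Running the random walk from $(0,k)$, absorption at $k = 0$ now requires $k$ successive "decrease" events, i.e.\ the $m$-coordinate performs (between decrements) excursions of a symmetric $\pm 1$ walk, and one computes that $M_\tau$ is a sum of $k$ increments whose typical size is $O(\sqrt{k})$-ish per excursion but whose total standard deviation grows only like $k^{3/4}$ or so — in any case, the crucial point is the diffusive bound $\mathbb{E}[|M_\tau|] = o(k)$. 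Combining,
$$k \cdot f(0,1) = f(0,k) \le \mathbb{E}[|M_\tau|] \cdot l(\alpha) = o(k),$$
and dividing by $k$ and letting $k \to \infty$ forces $f(0,1) \le 0$, hence $l(\alpha\beta\alpha^{-1}\beta^{-1}) = 0$.

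The main obstacle is making the estimate $\mathbb{E}[|M_\tau|] = o(k)$ rigorous: one must verify that the stopping time $\tau$ (with no truncation at $N$, in the limit) is almost surely finite with good tail bounds, that the a priori linear bound $f(m,k) \le C(|m|+k)$ justifies passing to the limit and discarding the truncation error $\mathbb{E}[f(M_N, K_N)\mathbf{1}_{\tau > N}]$, and that the displacement $M_\tau$ — which is a sum over $k$ independent excursion-contributions of a mean-zero random walk — satisfies a sublinear moment bound; a clean way is to note $M_\tau$ is a martingale-difference sum of $k$ terms each with finite variance, giving $\mathbb{E}[|M_\tau|] \le \sqrt{\mathrm{Var}(M_\tau)} = O(\sqrt{k} \cdot (\text{per-step spread}))$, which is comfortably $o(k)$. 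Once the bookkeeping for the optional-stopping / dominated-convergence step is in place, the conclusion is immediate, and positivity fails since $\alpha\beta\alpha^{-1}\beta^{-1} \ne e$ in $\F$.
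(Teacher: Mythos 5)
Your proposal is correct and follows essentially the same route as the paper: Lemma~\ref{fritz} is interpreted as a sub\-martingale inequality for the left/down-right random walk, a diffusive $O(\sqrt{k})$ estimate bounds $f(0,k)$, and dividing by $k$ (via homogeneity) and letting $k\to\infty$ forces $l(\alpha\beta\alpha^{-1}\beta^{-1})=0$. The only difference is bookkeeping: the paper runs the walk for a fixed $2k$ steps and applies Chebyshev together with the linear bound $f(m,k)\le |m|+2k$, whereas you stop at absorption on $k=0$, where in fact $M_\tau = 2k-\tau$ with $\tau$ a sum of $k$ geometric variables, so $\mathbb{E}[|M_\tau|]\le\sqrt{2k}$ (not $k^{3/4}$) and your optional-stopping/truncation outline goes through as stated.
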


We mention some of the ingredients in the proof of Theorem~\ref{main} using Lemma~\ref{fritz} with $x=\alpha$ and $y=\beta$. Consider a random walk on points $(m, k)\in\Z^2$ where we move to $(m- 1, k)$ (i.e., one step to the left) with probability $1/2$ and to $(m+ 1, k-1)$ (i.e, diagonally down and right) with probability $1/2$. Then Lemma~\ref{fritz} says that $f(m, n)$ is at most the \emph{average} value of $f$ after one step, and hence inductively after $n$ steps for $n\in \N$. Also observe that we move on an average $1/2$ a step downwards (the left and right movements cancel on an average). Hence if we start at a point $(0, n)$,
(where $n\in \N$) the distribution after $2n$ steps is centered around the origin, and $f(0, n)$ is bounded by the average value of $f$ on this distribution. This average in turn can be bounded using the Chebyshev inequality (as in the proof of the law of large numbers) together with the observation that $f(k, l)\leq m + 2k$ if $l$ is normalized (the latter follows by a straightforward inductive use of the triangle inequality and conjugacy invariance). The bound thus obtained is of the form $f(0, n) \leq C\sqrt{n}$ for some constant $C \in \R$. Finally, as homogeneity gives $l(\alpha\beta\alpha^{-1}\beta^{-1}) \leq f(0, n)/n$, taking a limit as $n\to\infty$ gives $l(\alpha\beta\alpha^{-1}\beta^{-1}) = 0$.

\section{The Algorithms}

Our proof was generated by a mixture of algorithms and expert guidance (with some arbitrary choices). More precisely, given certain \emph{auxiliary choices}, a \emph{deterministic} algorithm gave upper bounds $L(g)$ such that $l(g)\leq L(g)$ for all normalized, homogeneous, conjugacy-invariant pseudo-length functions $l: \F\to\R$ and for all $g\in\F$. 

The auxiliary choices were a finite sequence of pairs $(g_i, n_i)$, with $g_i\in\F$ and $n_i\in\N$. We used homogeneity \emph{only} for these pairs. 
Thus, our algorithm actually gives an upper bound for all functions $l: \F \to \R$ such that 
\begin{itemize}
	\item $l$ is a normalized, conjugacy-invariant length function on $\F$.
	\item $l(g_i) \leq l(g_i^{n_i})/n_i$ for all pairs $(g_i, n_i)$.
\end{itemize}

We shall call such pairs $(g_i, n_i)$ \emph{homogeneity pairs} and a sequence of homogeneity pairs a \emph{homogeneity pair sequence}. Explicit choices for such pairs that give a proof similar to the posted one are given in~\ref{S:Results}, along with links to a script to replicate this (which runs in under 10 seconds on a moderately powerful laptop/desktop). We discuss in~\ref{S:Expertless} how plausible it is to have arrived at such choices through general principles, without expert guidance.

We used a deterministic algorithm (depending on a homogeneity pair sequence) to obtain upper bounds $L(g)$ with $l(g) \leq L(g)$ for all lengths as above and for all $g\in \F$.
Using this, we computed the bound 
$$l(\alpha\beta\alpha^{-1}\beta^{-1}) \leq \min\{\frac{L((\alpha\beta\alpha^{-1}\beta^{-1})^n)}{n}) : 1 \leq n \leq 20\}.$$
This (after keeping track of inequalities and rendering in human readable form) was the posted proof.

All pseudo-lengths we consider henceforth will be assumed to be normalized and conjugacy-invariant (but not necessarily homogeneous).

\subsection{Maximal homogeneous pseudo-lengths}

It is convenient to reformulate our main problem using a standard construction. Namely, we define a function $l_h: \F \to \R$ by defining, for $g \in \F$, $l_h(g)$ to be
$$\max\{ l(g) : \textrm{ $l$ normalized, homogeneous, conjugacy-invariant pseudo-length}\}.$$
 It is well known that this is well-defined and gives the maximal normalized, homogeneous, conjugacy-invariant pseudo-length on $\F$. Thus, our main problem is equivalent to finding upper bounds for $l_h(g)$, in particular for $l_h(\alpha\beta\alpha^{-1}\beta^{-1})$.

\subsection{Bounding conjugacy-invariant pseudo-lengths}	

We now make analogous constructions dropping the homogeneity condition. Let $\mathcal{L}_c$ be the set of all normalized,  conjugacy-invariant pseudo-length functions on $\F$. We define, for $g\in\F$, 
$$l_c(g) = \max\{ l(g) : l\in\mathcal{L}_c \}.$$
This is well-defined and gives the maximal normalized, conjugacy-invariant pseudo-length on $\F$ (i.e., the maximal element of $\mathcal{L}_c$). Further, clearly $l_h(g) \leq l_c(g)$, so upper bounds on $l_c(g)$ give ones on $l_h(g)$. 

We describe in Section~\ref{S:Algo} an algorithm to obtain an upper bound $L_c(g)$ for $l_c(g)$. Indeed this bound is sharp, i.e., we have $l_c(g) = L_c(g)$ for all $g\in \F$ (we do not prove or use this, but this fact motivated our approach). Here and henceforth we follow the convention that we use $l$ with subscripts to denote pseudo-lengths we wish to bound (whose definition may be non-constructive) and $L$ with the same subscript to denote algorithmic upper bounds for these lengths.

\subsection{Conjugacy-invariant lengths with elementary bounds}

Next, suppose we are given a finite set  $B$ of pairs $(g_i, x_i)$, $1 \leq i\leq m$, with $g\in \F$ and $x_i\geq 0$, $x_i\in\R$ (we call this a set of \emph{elementary bounds}). We consider a refinement of $l_c$ and a corresponding modified algorithm (our definitions and algorithms do not depend on the order of the pairs $(g_i, x_i)$).  

Namely, for $g\in\F$, we define 
$$l_b(g; B) = \max\{ l(g) : l\in\mathcal{L}_c,\ l(g_i)\leq x_i\ \forall(g_i, x_i)\in B\}.$$
The function $l_b(g; B)$ is a normalized, conjugacy-invariant pseudo-length on $\F$ which is maximal among such lengths that satisfy the additional bounds $l(g_i) \leq x_i$ for all $(g_i, x_i)\in B$. 

Note that $l_c(g) = l_b(g; \emptyset)$ for $g\in\F$. A straightforward modification of the algorithm describing $L_c(g) = l_c(g)$ gives an algorithm giving bounds $L_b(g; B)$ such that $l_b(g; B) \leq L_b(g; B)$ for all $g\in \F$. We remark that the bound given by this algorithm is not optimal.\footnote{Indeed, an optimal algorithm for $l_b(g; B)$ for general finite $B$ gives a solution to the word problem for groups, which is known to be algorithmically undecidable. Namely, given \emph{relations} $r_1 \in \F$, $r_2\in \F$, \dots $r_m \in \F$, let $B$ be the set $\{(r_1, 0), (r_2, 0), \dots, r_n, 0)\}$. Then $l_b(g; B)= 0$ if and only if $g$ is trivial in the group $\langle \alpha, \beta; r_1 = e$, $r_2 = e, \dots r_m =e \rangle$.}

We shall say that the set of elementary bounds is \emph{admissible} if $l_h(g_i)\leq x_i$ for all $i$, $1\leq i\leq m$. Observe that we can algorithmically obtain an admissible set of elementary bounds from a homogeneity pair sequence $(g_i, n_i)$, $1\leq i\leq m$, by setting $x_i = \frac{l_c(g_i^{n_i})}{n_i} =  \frac{L_c(g_i^{n_i})}{n_i}$ as $l_h(x_i) = \frac{l_h(g_i^{n_i})}{n_i} \leq \frac{l_c(g_i^{n_i})}{n_i} = x_i$.

Note that if a set of elementary bounds $B$ is admissible, then $l_h(g)\leq l(g; B)$ for all $g\in \F$. Hence $L_b(g; B)$ gives an upper bound for $l_h$. We use such a bound, but with the process of obtaining elementary bounds from a homogeneity pair sequences a refinement of setting $x_i = \frac{l_c(g_i^{n_i})}{n_i}$ (and depending on the order of the pairs).

\subsection{Bounds with homogeneity pair sequences}\label{S:Homogeneity}

In this section we describe algorithms depending on a \emph{homogeneity pair sequence} in terms of algorithms depending on \emph{elementary bounds}, essentially by deducing elementary bounds using homogeneity. The algorithms depending on elementary bounds are described in~\ref{S:Algo}, which the reader may prefer to read first. In~\ref{S:PairAlgo} we describe how to modify the algorithms of~\ref{S:Algo} along the lines described below.

Assume that we are given a homogeneity pair sequence, i.e., a finite sequence of pairs $(g_i, n_i)$, $1 \leq i \leq m$. We define inductively in $j$ (simultaneously) 
\begin{itemize}
	\item an elementary bound $(g_j, x_j)$ (with  the element $g_j$ from the given homogeneity pair sequences), 
	\item a length function $l_j:\F \to\R$, such that $l_h(g)\leq l_j(g)$ for all $g\in\F$, and
	\item An \emph{algorithmically defined} length function  $L_j:\F \to\R$, $0\leq i \leq m$,  such that $l_j(g)\leq L_j(g)$ for all $g\in\F$.
\end{itemize}

First, let $l_0(g) = l_c(g)$. Let $L_0(g) = L_c(g)$, which we recall can be computed algorithmically (as described in~\ref{S:Algo}). 

Next, let $x_1 = \frac{l_0(g_1^{n_1})}{n_1} = \frac{L_c(g_1^{n_1})}{n_1}$ and define $l_1(g) = l_b(g; \{(g_1, x_1)\})$.  By homogeneity, $l_h(g_1)\leq x_1$, so by maximality of $l_b(g; \{(g_1, x_1)\})$, $l_h(g)\leq l_1(g)$ for all $g\in\F$.

Recall that we have an algorithm (described in~\ref{S:Algo}) giving (for $g\in\F$) an upper bound  $L_b(g; \{(g_1, x_1)\})$ for $l_b(g; \{(g_1, x_1)\})$. Define $L_1(g) = L_b(g; \{(g_1, x_1)\})$ for $g \in \F$.

Continuing in this fashion define 
\begin{itemize}
	\item $x_2 = \frac{L_1(g_2^n)}{n_2}$ (which can be algorithmically computed),
	\item $l_2(g) = l_b(g; \{(g_1, x_1), (g_2, x_2)\})$, and
	\item $L_2 = L_b(g; \{(g_1, x_1), (g_2, x_2)\})$ (which is algorithmic).
\end{itemize}

As before, we have the bounds $l_h(g) \leq l_2(g)$ for all $g\in\F$. 

Inductively, given $k<m$, $x_i\in \F$ for $1\leq i \leq k$, a function $l_k: \F \to \R$, and an algorithmically defined function $L_k: \F \to \R$, define 

\begin{itemize}
	\item $x_{k+1} = \frac{L_k(g_{k+1}^{n_{k+1}})}{n_{k+1}}$ (which can be algorithmically computed),
	\item $l_{k+1}(g) := l_b(g; \{(g_1, x_1), (g_2, x_2), \dots, (g_{k+1}, n_{k+1})\})$, and 
	\item $L_{k+1}(g) := L_b(g; \{(g_1, x_1), (g_2, x_2), \dots, (g_{k+1}, n_{k+1})\})$
\end{itemize}
 The function $L(g) := L_m(g) = L_b(g; \{(g_1, x_1), (g_2, x_2), \dots, (g_{n}, n_{m})\})$ is the desired upper bound for $l_h$.

\subsection{Algorithm for conjugacy-invariant pseudo-lengths}\label{S:Algo}

We now describe the algorithms giving $L_c(g)$ and $L_b(g; B)$, i.e. giving upper bounds for $l_c(g)$ and $l_b(g; B)$. Recall that $l_c(g) = l_b(g ; \emptyset)$.

Let $l$ be a normalized, conjugacy-invariant pseudo-length $l(g)$, which may also be assumed to satisfy a finite number of elementary bounds.
We describe an upper bound  for $l(g)$ for a word $g=\xi_1\xi_2\dots \xi_n$ recursively in the length $n$ of the word. The key ingredient is the following lemma bounding $l(g)$ in terms of bounds on shorter words.
\begin{lemma}\label{recbound}
	Let $g=\xi_1\xi_2\dots \xi_n$ with $n > 1$.
	\begin{enumerate}[label=(\alph*)]
		\item\label{plus1} $l(g) \leq 1 + l(\xi_2\xi_3\dots \xi_n)$
		\item\label{pair} If $\xi_k = \xi_1^{-1}$, then $l(g)\leq l(\xi_2\xi_3\dots \xi_{k-1}) + l(\xi_{k+1}\xi_{k+2}\dots \xi_n)$
	\end{enumerate}
\end{lemma}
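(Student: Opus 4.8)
The plan is to prove the two bounds directly from the three defining axioms of a pseudo-length function together with conjugacy invariance; the elementary bounds play no role in the argument, since the inequalities hold for \emph{any} normalized conjugacy-invariant pseudo-length. For part~\ref{plus1}, I would split $g = \xi_1 \cdot (\xi_2\xi_3\cdots\xi_n)$ and apply the triangle inequality to get $l(g) \le l(\xi_1) + l(\xi_2\cdots\xi_n)$; since $l$ is normalized, $\xi_1 \in S = \{\alpha,\beta,\alpha^{-1},\beta^{-1}\}$ gives $l(\xi_1) \le 1$, which yields the claim. (One should note the implicit convention that when $k-1 < 2$ or $k+1 > n$ the corresponding subword is empty and has length $l(e) = 0$; similarly the word $\xi_2\cdots\xi_n$ in part~\ref{plus1} is interpreted as $e$ if $n = 1$, though the hypothesis $n > 1$ makes this a non-issue there.)

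For part~\ref{pair}, the hypothesis is that some later letter $\xi_k$ equals $\xi_1^{-1}$. The key trick is conjugacy invariance: writing $g = \xi_1 \cdot u \cdot \xi_1^{-1} \cdot v$ where $u = \xi_2\cdots\xi_{k-1}$ and $v = \xi_{k+1}\cdots\xi_n$, I would observe that $g$ is conjugate to $\xi_1^{-1} \cdot g \cdot \xi_1 = u \cdot \xi_1^{-1} \cdot v \cdot \xi_1$. Hmm — that cyclic conjugation moves $\xi_1$ to the far right but does not immediately help. A cleaner route: $g = \xi_1 u \xi_1^{-1} \cdot v$, so by the triangle inequality $l(g) \le l(\xi_1 u \xi_1^{-1}) + l(v)$, and then conjugacy invariance gives $l(\xi_1 u \xi_1^{-1}) = l(u)$. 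Substituting, $l(g) \le l(u) + l(v) = l(\xi_2\cdots\xi_{k-1}) + l(\xi_{k+1}\cdots\xi_n)$, exactly as claimed. So the proof is: factor out the conjugating letter, apply the triangle inequality to separate the conjugated middle block from the tail, then use conjugacy invariance to strip the conjugator.

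I do not anticipate a genuine obstacle here — both parts are immediate consequences of the axioms. The only point requiring a little care is bookkeeping around degenerate subwords (empty words, the boundary cases $k = 2$ and $k = n$), where one must confirm that the stated inequalities remain valid with the convention $l(e) = 0$; this is routine. It is worth emphasizing in the write-up that part~\ref{plus1} uses \emph{only} the triangle inequality and normalization, while part~\ref{pair} is where conjugacy invariance enters, since this is the lemma that will be iterated recursively to bound $l(g)$ in terms of strictly shorter words, and the recursion terminates because both right-hand sides involve words of length $\le n - 1$.
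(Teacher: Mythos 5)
Your proof is correct and follows essentially the same route as the paper: part~\ref{plus1} via the triangle inequality plus normalization, and part~\ref{pair} by splitting $g$ as $(\xi_1\xi_2\dots\xi_{k-1}\xi_1^{-1})\cdot(\xi_{k+1}\dots\xi_n)$ with the triangle inequality and then stripping the conjugator by conjugacy invariance. Your added remarks on degenerate subwords and on which axioms enter where are sensible but do not change the argument.
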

\begin{proof}
	To see~\ref{plus1}, observe that
	\begin{align*}
			l(g) &= l(\xi_1\xi_2\dots \xi_n) & \\
			&\leq l(\xi_1) + l(\xi_2\xi_3\dots \xi_n), & (\textrm{by the triangle inequality}) \\
			&\leq 1 + l(\xi_2\xi_3\dots \xi_n), & (\textrm{as $l$ is normalized})
	\end{align*}
as claimed.

	Next, suppose $\xi_k = \xi_1^{-1}$. By the triangle inequality,
	\begin{align}
	l(g) &\leq l(\xi_1\xi_2\xi_3\dots \xi_{k-1}\xi_k) + l(\xi_{k+1}\xi_{k+2}\dots \xi_n) \\
			&= l(\xi_1(\xi_2\xi_3\dots \xi_{k-1})\xi_1^{-1}) + l(\xi_{k+1}\xi_{k+2}\dots \xi_n).\label{eq:triang}
	\end{align}
	Further, by conjugacy invariance of $l$,
\begin{equation}\label{eq:conj}
	l(\xi_1(\xi_2\xi_3\dots \xi_{k-1})\xi_1^{-1}) = l(\xi_2\xi_3\dots \xi_{k-1}).
\end{equation}

	Substituting~\eqref{eq:conj} in~\eqref{eq:triang}, we get
	 $$l(g)\leq l(\xi_2\xi_3\dots \xi_{k-1}) + l(\xi_{k+1}\xi_{k+2}\dots \xi_n),$$ showing~\ref{pair}.
\end{proof}

The algorithms are based on Lemma~\ref{recbound}.
Elementary bounds $l(g_i)\leq x_i$, $1\leq i\leq m$ are specified by a map $L_0: D \to\R$ with $D \subset \F = \{g_1, \dots, g_m\}$ and $L_0(g_i) = x_i$. If we have no such bounds, i.e. we are computing $l_c$, we  initially take $D=\emptyset$ and $L_0$ as the empty map (however the map is updated to avoid repeating computations). 

For $g\in \F$, the recursive algorithm shown in Figure~\ref{algo} gives a bound $L(g)$ so that $l(g) \leq L(g)$ for any normalized, conjugacy-invariant pseudo-length $l$ on $\F$. 
 We describe this using sets in mathematical language, but this can be readily translated to code using, for instance, list comprehensions.

\begin{figure}
	\caption{Algorithm for bounding lengths}\label{algo}
\medskip
\begin{mdframed}
	\begin{itemize}
		\item For $g \in \F$, compute $L(g)$ by
		\begin{itemize}
			\item If $g = e$ is the empty word, \textbf{define} $L(g) := 0$.
			\item If $g=\xi_1$ has exactly one letter, \textbf{define} $L(g) := 1$.
			\item If $d\in D$, \textbf{define} $L(g) = L_0(g)$.
			\item If $g = \xi_1\xi_2\dots \xi_n$ has at least two letters (and $g\notin D$):
			\begin{itemize}
				\item let $\lambda_0 = 1 + L(\xi_2\xi_3\dots \xi_n)$ (computed recursively).
				\item for $2\leq k \leq n$, define $$\lambda_k = L(\xi_2\xi_3\dots \xi_{k-1}) + L(\xi_{k+1}\xi_{k+2}\dots \xi_n).$$
				\item let $\Lambda$ be the set
				$$\Lambda =  \{\lambda_k : 2 \leq k\leq n, \xi_k = \xi_1^{-1}\}.$$
				\item let $x = \min(\{\lambda_0\}\cup \Lambda)$.
				\item let $D := D\cup\{g\}$ and extend $L_0$ by defining $L_0(g)= x$.
				\item \textbf{define} $L(g) = x$.
			\end{itemize}
	\end{itemize}
\end{itemize}
\end{mdframed}
\end{figure}

Furthermore, we can keep track of a labelled rooted tree of inequalities used to compute $L(g)$, and hence bound $l(g)$. We give a schematic condensed example of such a tree in Figure~\ref{tree}. Essentially such a tree (actually a corresponding Algebraic Data Type) was used to generate the human readable proof.

\begin{figure}
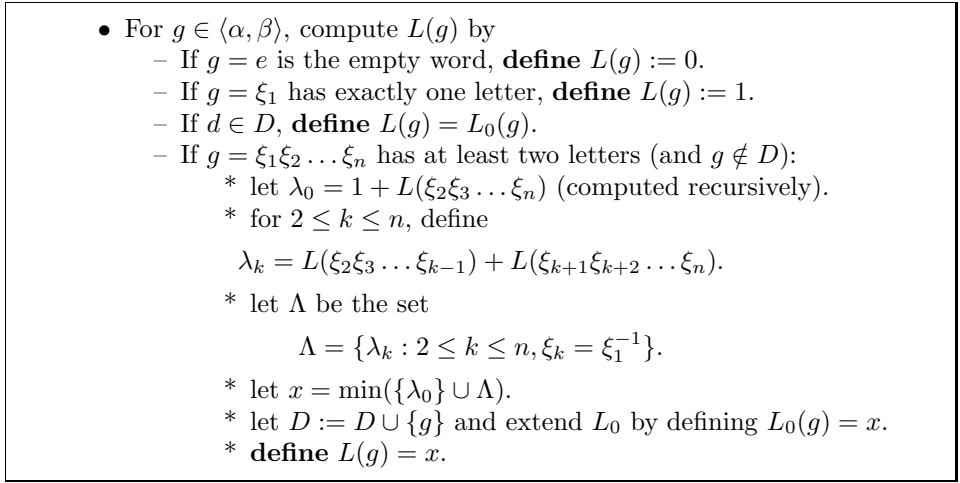

	\caption{Proof tree of $l(\alpha\beta\alpha^{-1}\beta^{-1}) \leq 1$ in a YAML-like format}\label{tree}
	\medskip
\begin{mdframed}
\begin{itemize}
	\item bound: $l(\alpha\beta\alpha^{-1}\beta^{-1}) \leq 1$
	\item proof: triangle-inequality
	\begin{itemize}
		\item first-inequality:
		\begin{itemize}
			\item bound: $l(\alpha)\leq 1$
			\item proof: length-is-normalized
		\end{itemize}
		\item second-inequality
		\begin{itemize}
			\item bound: $l(\beta\alpha^{-1}\beta^{-1})\leq 1$
			\item proof: conjugacy-invariance
			\begin{itemize}
				\item conjugated-by: $\beta$
				\item base-inequality:
				\begin{itemize}
					\item bound: $l(\alpha^{-1}) \leq 1$.
					\item proof: length-is-normalized.
				\end{itemize}
			\end{itemize}
		\end{itemize}
	\end{itemize}
\end{itemize}
\end{mdframed}
\end{figure}

Observe that the function $L$ is integer valued, and can hence be computed exactly. On using homogeneity we obtain rational bounds. These were stored as double precision real numbers -- we switched to arbitrary precision rational numbers at one stage, but switched back for performance reasons during the recursive computations (as these were involved in  a large search). Note however that it is easy (and efficient) to map a proof tree using doubles to one using arbitrary precision rational numbers to ensure that there is no error in rounding off. Indeed, as we discuss in~\ref{S:Results}, we have subsequently implemented the mapping of proofs to exact rational bounds.

\subsection{Bounding with a homogeneity pair sequence}\label{S:PairAlgo}

We now describe how to modify the above algorithm given a homogeneity pair sequence. This is following the approach of~\ref{S:Homogeneity}, but with one minor difference as mentioned in Remark~\ref{R:Memo}.

Suppose now that we are given a homogeneity pair sequence, i.e., a finite sequence of pairs $(g_i, n_i$), $1\leq i \leq m$. We initialize $L_0$ to be the empty map, and compute $L(g_1^{n_1})$ using the main algorithm. We let $x_1 = L(g_1^{n_1})/n_1$ and update the map $L_0$ by setting $L_0(g_1) = x_1$.

Next we use the main algorithm to compute $L(g_2^{n_2})$, but with $L_0$ the map obtained at the end of the previous computation and after setting $L_0(g_1) = x_1$. Again let $x_2 = L(g_2^{n_2})/n_2$ and update the map $L_0$ by setting $L_0(g_2) = x_2$. We proceed in this fashion to obtain the numbers $x_1$, $x_2$, \dots, $x_n$, and an \emph{updated} map $L_0$.

Finally, for an elements $g\in\F$, we define the function $L(g)$ as the result of using the algorithm with the map $L_0$ obtained at the end of the above sequence of computations and updates.

\begin{remark}\label{R:Memo}
	While the above algorithm is very similar to that described in~\ref{S:Homogeneity}, it gives in general slightly worse bounds. This is because, for a fixed $g_0\in\F$, if $L(g_0)$ is computed for a word while computing, for example $x_1$ (which happens if $g_0$ is a subword of $g_1^{n_1}$), we do not recompute $L(g_0)$ when computing, for example $x_2$. However we may get a smaller value (i.e., better bound) if we recomputed $L(g_0$) as we have the additional elementary bound $L(x_1) \leq g_1$ (we get an improved bound if $g_1$ is a subword of $g_0$ and $L(g_1^{n_1}) < n_1L(g_1)$). It is easy to avoid this by setting the map $L_0$ when computing $x_i$ to be just the earlier elementary bounds, i.e. set $D=\{g_j: j < i\}$ and $L_0(g_j) = x_j$. However, this comes at a cost in efficiency due to computations being repeated. 
\end{remark}

\subsection{Choices and results}\label{S:Results}

In generating the proofs, we used the family of words of the form $\gamma_k = \alpha(\alpha\beta\alpha^{-1}\beta^{-1})^k$, with this family chosen based on expert knowledge. From these, we constructed a homogeneity sequence, depending on certain choices. 

Namely, our homogeneity pair sequence was of the following form (with explicit choices stated, which we have used in a script as mentioned below):
\begin{itemize}
	\item We choose and fix $N \geq 1$ (we take $N=20$).
	\item Choose and fix a few values of $k$ (chosen with some experimentation), say $k_1$, $k_2$, \dots, $k_m$ (we take $m=3$ with $k_1=1$, $k_2=2$ and $k_3= 6$).
	\item We get a homogeneity pair sequence taking each element $\gamma_{k_i}$ with each exponent between $1$ and $N$, namely
	$$(\gamma_{k_1}, 1), \dots, (\gamma_{k_1}, N), (\gamma_{k_2}, 1), \dots, (\gamma_{k_2}, N), \dots, (\gamma_{k_m}, 1), \dots, (\gamma_{k_m}, N).$$
	\item The homogeneity pair sequence we use is the above sequence \emph{followed by} the sequence $$(\alpha\beta\alpha^{-1}\beta^{-1}, 1), (\alpha\beta\alpha^{-1}\beta^{-1}, 2) \dots, (\alpha\beta\alpha^{-1}\beta^{-1}, N)$$ 
\end{itemize}

With the explicit choices $N=20$, $m=3$, $k_1=1$, $k_2=2$ and $k_3= 6$, we get the bound $$l_h(\alpha\beta\alpha^{-1}\beta^{-1})\leq 0.8098765432098762$$
and a corresponding human readable proof. 

Furthermore, we can map proofs to  arbitrary precision rational bounds to avoid rounding errors. Mapping the above proof gives the bound  (with no rounding-off error)
$$l_h(\alpha\beta\alpha^{-1}\beta^{-1})\leq 328/405.$$

We have created an \emph{executable jar} file to replicate generating this proof (as well as mapping to a proof with arbitrary precision rational bounds). This is available online at~\url{http://math.iisc.ac.in/~gadgil/PolyProof.html} (with instructions on running it), along with sample output (slightly reformatted). On the systems we used (a desktop and a laptop with Core~i7 processors) this runs in under 10 seconds.\footnote{The full code is in the repository  \href{https://github.com/siddhartha-gadgil/Superficial}{\texttt{https://github.com/siddhartha-gadgil/Superficial}}. The script is generated from this source. The script uses the same algorithms we originally used, but with modifications to be more robust in memory usage and to avoid concurrency (as the concurrency we implemented leads to non-determinacy, and occasionally to race conditions).} The proof generated by this script is a little longer than the one originally posted ($173$ lines instead of $126$) but gives a slightly better bound.

Values for $N$ and the indices $k_i$ were obtained by experimentation, and the bounds are fairly robust when we vary choices.

\begin{remark}
	In generating the script we only used our knowledge that $k=6$ was useful (though not crucial) while generating the original proof, and the only other choices we tried were taking $N = 10$, which also gives a bound below $1$, and also taking $k$'s to be $1$, $2$ ,$3$ and $6$, which only marginally improved the bound. 
\end{remark}

The choice of the family $\gamma_k$ was based on mathematical considerations related to~\cite{gadgil}, and this was the only expert guidance. We next discuss whether finding the proof was plausible using general principles in place of expert knowledge.

\subsection{Auxiliary choices without expert knowledge?}\label{S:Expertless}

As we have seen, the only expert guidance was the ``natural family of group elements $\gamma_k$". We sketch a series of general considerations (some using basic group theory) that could plausibly have led to the same family.

\begin{itemize}
	\item A natural measure of \emph{usefulness} of a homogeneity pair $(g, n)$ is the ratio $\rho(g, n) = \frac{l_c(g)}{l_c(g^n)/n}$, as this is an upper bound on the ratio $l_c(h)/l_b(h; \{(g, n)\})$ for $h\in \F$, i.e., the maximum possible improvement in bounds (a value of $1$ means no gain from using homogeneity). 
	\item Rather than looking for individual useful elements, we look for \emph{families} $\gamma_k$ of useful elements, choosing between families by small scale sampling.
	\item We look for \emph{natural} families in the sense of having a simple description in terms of the group operations. The simplest families in a group are those of the form $a^k$ for fixed $a \in \F$ and the next simplest are those of the form $ab^k$ for fixed $a, b\in\F$. The family we considered is of the second form. (As $b^k a$ is conjugate to $a^{-1}b^k$, families of the form $b^ka$ are equivalent to those we considered.)
	\item We first consider simple families, i.e., with simple words for $a$ and $b$, while using symmetries of the problem to reduce choices. 
\end{itemize}

Here the symmetries are: transposing the generators $\alpha$ and $\beta$, transposing one or both generators with their inverses, and cyclic permutations of words. Further,  one needs to only consider \emph{reduced} words, i.e., those without a cancelling pair. Up to all these symmetries, there are only $6$ words with length $4$, $2$ each with lengths $3$ and $2$ and a single word with length $1$. Even allowing for not all symmetries being exploited (as they do depend on some expert knowledge) and allowing for different choices for the word $a$, the number of families of complexity comparable to the one we considered is modest.

Furthermore, if $b$ is a word of length at most $4$ which is not equivalent to $\alpha\beta\alpha^{-1}\beta^{-1}$, then $l_h(b) = l_c(b)$, which in particular implies that $\rho(ab^k, n) \approx 1$ for large $k$ and $n$. Thus the families with other values of $b$ can be ruled out as not useful with limited experimentation. 

Thus, if one searches through natural families, with simpler ones considered first and symmetries exploited to avoid duplication, and assesses each family rapidly by measuring improvements in relevant bounds after small scale sampling, one is likely to arrive at the family we considered (or an equivalent one) in a reasonable amount of time.

\section{Concluding remarks}\label{S:Conclusions}

If we view applications of the axioms as \emph{moves},  the computer proof helped in identifying \emph{composite moves} that could be applied \emph{recursively} for words in appropriate families. These were abstracted and generalized to give the core lemma. One can hope that in other situations as well computer generated proofs targeting key examples give hints about useful composite moves, and especially those that can be used iteratively.

The principal difficulty in finding computer proofs often lies in choosing the useful moves among those that increase complexity, which in our case are applications of homogeneity. In this work, we primarily based ourselves on mathematical considerations, partly because the auxiliary choices were identified even before we started programming, and partly because of the fast pace\footnote{the proof was posted less than two days after we began writing code, and the main question answered less than a day after that.}. However, we have attempted to justify in Section~\ref{S:Expertless} that it is plausible that experimentation and general considerations could have led to similar results, with the use of one heuristic -- one should search for \emph{natural families} of useful moves. 

We used modest computing resources (including time) and did not use search heuristics. It is thus all the more likely that domain specific expertise could have been replaced by (or combined with) the vast arsenal of well-known heuristics for tree searches, such as Alpha-beta pruning, Markov Chain Monte Carlo and various Machine learning techniques.

Fully automating the search in this and similar problems involves identifying useful candidate families (beyond simply enumerating those with simple descriptions). This is an interesting challenge, and it would be interesting to explore various techniques for this. Computer scientists have developed various techniques to find inputs trigger a bug (such as fuzzing) and techniques to minimize such inputs (such as delta-debugging). Perhaps these 
techniques are also useful for finding a human-understandable proof like the one presented in the paper.

\begin{acknowledgements}
	I thank the referees and the editors for many valuable comments, which have led to the paper being completely rewritten twice and much improved in the process. It is also a pleasure to thank the rest of the PolyMath 14 team for the collaboration of which the work described here is a part. 
	
\end{acknowledgements}

\end{document}